\newcommand{\Section}[1]%
{\section{#1}\setcounter{equation}{0}%
\setcounter{theorem}{0}}
\newtheorem{theorem}{Theorem}
\newtheorem{lemma}[theorem]{Lemma}
\newtheorem{assumption}[theorem]{Assumption}
\def\re{\mathbb{R}}
\def\co{\mathbb{C}}
\def\ze{\mathbb{Z}}
\newenvironment{proof}[1]%
{\par\noindent{\em #1:\ }}%
{~\rule{2mm}{2mm}\par\bigskip}
\begin{document}
\newpage\thispagestyle{empty}
{\topskip 2cm
\begin{center}
{\Large\bf Power-Law Decay Exponents of Nambu-Goldstone Transverse Correlations\\} 
\bigskip\bigskip
{\Large Tohru Koma\footnote{\small \it Department of Physics, Gakushuin University, Mejiro, Toshima-ku, Tokyo 171-8588, JAPAN,
{\small\tt e-mail: tohru.koma@gakushuin.ac.jp}}
\\}
\end{center}
\vfil
\noindent
{\bf Abstract:} We study a quantum antiferromagnetic Heisenberg model on a hypercubic lattice 
in three or higher dimensions $d\ge 3$. When a phase transition occurs with the continuous symmetry breaking, 
the nonvanishing spontaneous magnetization which is obtained by applying the infinitesimally weak 
symmetry breaking field is equal to the maximum spontaneous magnetization at zero or non-zero low temperatures. 
In addition, the transverse correlation in the infinite-volume limit exhibits a Nambu-Goldstone-type slow decay. 
In this paper, we assume that the transverse correlation decays by power law with distance. 
Under this assumption, we prove that the power is equal to $2-d$ at non-zero low temperatures, 
while it is equal to $1-d$ at zero temperature. 
The method is applied also to a quantum XY model and a classical Heisenberg model at non-zero low temperatures 
in three or higher dimensions. The resulting power is given by the same $2-d$ at non-zero low temperatures. 
\par
\noindent
\bigskip
\hrule
\bigskip\bigskip\bigskip
\vfil}

\Section{Introduction}

In general, when phase transitions occur with continuous symmetry breaking in many-body systems, 
there appear Nambu-Goldstone modes \cite{Nambu,NJL,Goldstone,GSW} which are reflected in a slow decay of the transverse correlations. 
In order to elucidate the universal nature of Nambu-Goldstone modes,  
we study classical and quantum lattice spin systems which exhibit phase transitions at zero or 
low non-zero temperatures with continuous symmetry braking \cite{FSS,DLS,KLS,KLS2,KuboKishi}. 
In the previous paper \cite{Koma}, we studied a quantum antiferromagnetic Heisenberg model as an example, 
and proved that, 
when the spontaneous magnetization is non-vanishing at zero or non-zero low temperatures,  
the transverse correlations in the infinite-volume limit exhibit a Nambu-Goldstone-type slow decay with distance. 
However, this does not necessarily imply a power-law decay of the transverse correlations with large distance. 
In this paper, we make a natural assumption that the transverse correlations show a power-law decay with large distance. 
Under this assumption, we can determine the power of the decay of the transverse correlations. 
More precisely, for the antiferromagnetic Heisenberg model in three or higher dimensions $d\ge 3$, 
the power is given by $2-d$ at non-zero temperatures, while it is given by $1-d$ at zero temperature. 
Our approach is also applied to a quantum XY model and a classical Heisenberg model at non-zero temperatures 
in three or higher dimensions. The resulting power is given by the same $2-d$.  

\Section{Models and results}

In this and the following two sections, we will treat only a quantum Heisenberg antiferromagnet   
which has reflection positivity \cite{DLS,KLS}. 
The treatment of a quantum XY model and a classical Heisenberg model both of which have 
reflection positivity \cite{FSS,DLS} is given in Sec.~\ref{Sec:other}
because our approach to these two models is the same as that to the quantum Heisenberg antiferromagnet. 

Let $\Gamma$ be a finite subset of the $d$-dimensional hypercubic lattice $\ze^d$, i.e., $\Gamma\subset\mathbb{Z}^d$,  
with $d\ge 3$. For each site $x=(x^{(1)},x^{(2)},\ldots,x^{(d)})\in\Gamma$, 
we associate three component quantum spin operator ${\bf S}_x=(S_x^{(1)},S_x^{(2)},S_x^{(3)})$ 
with magnitude of spin, $S=1/2,1,3/2,2,\ldots$. More precisely, the spin operators, $S_x^{(1)}, S_x^{(2)}, S_x^{(3)}$, 
are $(2S+1)\times(2S+1)$ matrices at the site $x$. They satisfy the commutation relations, 
$$
[S_x^{(1)},S_x^{(2)}]=iS_x^{(3)}, \quad [S_x^{(2)},S_x^{(3)}]=iS_x^{(1)}, 
\quad \mbox{and} \quad [S_x^{(3)},S_x^{(1)}]=iS_x^{(2)},
$$
and $(S_x^{(1)})^2+(S_x^{(2)})^2+(S_x^{(3)})^2=S(S+1)$ for $x\in\Gamma$. 
For the finite lattice $\Gamma$, the whole Hilbert space is given by 
$$
\mathfrak{H}_\Gamma=\bigotimes_{x\in\Gamma} \co^{2S+1}.
$$
More generally, the algebra of observables on $\mathfrak{H}_\Gamma$ is given by 
$$
\mathfrak{A}_\Gamma:=\bigotimes_{x\in\Gamma}M_{2S+1}(\co),
$$
where $M_{2S+1}(\co)$ is the algebra of $(2S+1)\times(2S+1)$ complex matrices. 
When two finite lattices, $\Gamma_1$ and $\Gamma_2$, satisfy $\Gamma_1\subset\Gamma_2$, 
the algebra $\mathfrak{A}_{\Gamma_1}$ is embedded in $\mathfrak{A}_{\Gamma_2}$ by 
the tensor product $\mathfrak{A}_{\Gamma_1}\otimes I_{\Gamma_2\backslash\Gamma_1}\subset 
\mathfrak{A}_{\Gamma_2}$ with the identity $I_{\Gamma_2\backslash\Gamma_1}$. 
The local algebra is given by 
$$
\mathfrak{A}_{\rm loc}=\bigcup_{\Gamma\subset\ze^d:|\Gamma|<\infty}\mathfrak{A}_{\Gamma},
$$
where $|\Gamma|$ is the number of the sites in the finite lattice $\Gamma$. 
The quasi-local algebra is defined by the completion of the local algebra $\mathfrak{A}_{\rm loc}$ 
in the sense of the operator-norm topology. 

Consider a $d$-dimensional finite hypercubic lattice, 
\begin{equation}
\label{Lambda}
\Lambda:=\{-L+1,-L+2,\ldots,-1,0,1,\ldots,L-1,L\}^d\subset\mathbb{Z}^d,
\end{equation}
with a large positive integer $L$ and $d\ge 3$. 
The Hamiltonian $H^{(\Lambda)}(B)$ of the Heisenberg antiferromagnet on the lattice $\Lambda$ is given by 
\begin{equation}
\label{H(B)}
H^{(\Lambda)}(B)=H_{0}^{(\Lambda)}-BO^{(\Lambda)},
\end{equation}
where the first term in the right-hand side is the Hamiltonian of the nearest neighbor spin-spin 
antiferromagnetic interactions, 
\begin{equation}
\label{H0}
H_{0}^{(\Lambda)}:=\sum_{\{x,y\}\subset\Lambda:\;|x-y|=1}{\bf S}_x\cdot{\bf S}_y,
\end{equation}
and the second term is the potential due to the external magnetic field $B\in\re$ with the order parameter, 
$$
O^{(\Lambda)}:=\sum_{x\in\Lambda}(-1)^{x^{(1)}+x^{(2)}+\cdots+x^{(d)}}S_x^{(3)}.
$$
Here, we impose the periodic boundary condition.  


The thermal expectation value at the inverse temperature $\beta$ is given by 
\begin{equation}
\label{thermalstate}
\langle\cdots\rangle_{B,\beta}^{(\Lambda)}:=\frac{1}{Z_{B,\beta}^{(\Lambda)}}
{\rm Tr}\;(\cdots)e^{-\beta H^{(\Lambda)}(B)},
\end{equation}
where $Z_{B,\beta}^{(\Lambda)}$ is the partition function. The infinite-volume thermal equilibrium state 
is given by 
\begin{equation}
\label{rhoB}
\rho_{B,\beta}(\cdots)={\rm weak}^\ast\mbox{-}\lim_{\Lambda\nearrow\ze^d}\langle\cdots\rangle_{B,\beta}^{(\Lambda)}.
\end{equation}
Here, if necessary, we take a suitable sequence of the finite lattices $\Lambda$ 
in the weak$^\ast$ limit so that the expectation value converges to a linear functional \cite{Koma}. Similarly, we write 
\begin{equation}
\label{def:rho0}
\rho_{0,\beta}(\cdots):={\rm weak}^\ast\mbox{-}\lim_{B\searrow 0}\rho_{B,\beta}(\cdots).
\end{equation}
Then, the spontaneous magnetization $m_{{\rm s},\beta}$ is given by 
\begin{equation}
\label{msbeta}
m_{{\rm s},\beta}:=\lim_{\Lambda\nearrow\ze^d}\frac{1}{|\Lambda|}\rho_{0,\beta}(O^{(\Lambda)})
\end{equation}
with the hypercubic lattice $\Lambda$ of (\ref{Lambda}) with the even side length $2L$. 
Because of the translational invariance with period 2, the right-hand side dose not depend on 
the size of the lattice $\Lambda$.

The ground-state expectation value is given by 
\begin{equation}
\omega_B^{(\Lambda)}(\cdots):=\lim_{\beta\rightarrow\infty}\langle\cdots\rangle_{B,\beta}^{(\Lambda)}.
\end{equation}
We also write 
\begin{equation}
\omega_0(\cdots):={\rm weak}^\ast\mbox{-}\lim_{B\searrow 0}{\rm weak}^\ast\mbox{-}\lim_{\Lambda\nearrow\ze^d}
\omega_B^{(\Lambda)}(\cdots).
\end{equation}
The corresponding spontaneous magnetization $m_{\rm s}$ at zero temperature is given by 
\begin{equation}
\label{ms}
m_{\rm s}:=\lim_{\Lambda\nearrow\ze^d}\frac{1}{|\Lambda|}\omega_0(O^{(\Lambda)}). 
\end{equation}

In the previous paper \cite{Koma}, we proved that the Nambu-Goldstone transverse correlations show a slow decay with large distance. 
In this paper, we require slightly stronger assumptions at zero and non-zero temperatures as follows:

\begin{assumption}
\label{Assumpnonzero}
We assume that for non-zero temperatures $\beta^{-1}$ satisfying $m_{{\rm s},\beta}>0$, the corresponding transverse correlation 
decays by power law, i.e.,  
\begin{equation}
\label{CorDecayAssumpnonzero}
\left|\rho_{0,\beta}(S_x^{(1)}S_y^{(1)})\right|\sim \frac{K(\beta)}{|x-y|^{d-2+\eta}}
\end{equation}
for a large $|x-y|$ with an exponent $\eta$, where $K(\beta)$ is a positive function of $\beta$. 
\end{assumption}

\noindent
{\it Remark:} Theorem~2.5 in the previous paper \cite{Koma} rules out the possibility of a rapid decay 
$o(|x-y|^{-(d-2)})$ for the transverse correlation $\rho_{0,\beta}(S_x^{(1)}S_y^{(1)})$ at non-zero temperatures, 
where $o(\varepsilon)$ denotes a quantity $q(\varepsilon)$ such that $q(\varepsilon)/\varepsilon$ is vanishing 
in the limit $\varepsilon\searrow 0$. 
Therefore, the exponent $\eta$ must satisfy $2-d<\eta\le 0$.  

\begin{assumption}
\label{Assumpzero}
We assume that $m_{\rm s}>0$, and that the corresponding transverse correlation decays by power law at zero temperature, i.e.,  
\begin{equation}
\left|\omega_0(S_x^{(1)}S_y^{(1)})\right|\sim \frac{{\rm Const.}}{|x-y|^{d-2+\eta'}}
\end{equation}
for a large $|x-y|$ with an exponent $\eta'$. 
\end{assumption}

\noindent
{\it Remark:} Similarly, Theorem~2.3 in the previous paper \cite{Koma} rules out the possibility of a rapid decay 
$o(|x-y|^{-(d-1)})$ for the transverse correlation $\omega_0(S_x^{(1)}S_y^{(1)})$ at zero temperature. 
This implies that the exponent $\eta'$ must satisfy $2-d<\eta'\le 1$.  
\medskip

Our results are as follows: 

\begin{theorem}
\label{theorem:nonzero}
Under Assumption~\ref{Assumpnonzero}, the following is valid: 
If the spontaneous magnetization $m_{{\rm s},\beta}$ is nonvanishing at non-zero temperatures $\beta^{-1}$ 
in dimensions $d\ge 3$, then the corresponding transverse correlation decays by power law,  
\begin{equation}
\rho_{0,\beta}(S_x^{(1)}S_y^{(1)})\sim (-1)^{x^{(1)}+\cdots +x^{(d)}}(-1)^{y^{(1)}+\cdots +y^{(d)}}
\frac{K(\beta)}{|x-y|^{d-2}},
\end{equation}
for a large $|x-y|$, where $K(\beta)$ is a positive function of $\beta$. Namely, the exponent is given by $\eta=0$.  
\end{theorem}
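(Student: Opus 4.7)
The strategy is to complement the lower bound $\eta\le 0$ already established by Theorem~2.5 of \cite{Koma} with a matching upper bound $\eta\ge 0$ coming from reflection positivity. Together these force $\eta=0$; the staggered sign structure and the positivity of $K(\beta)$ then follow by plugging $\eta=0$ into the Nambu--Goldstone sum-rule argument of \cite{Koma} and invoking $m_{{\rm s},\beta}>0$.

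First I would invoke the reflection positivity of $H_0^{(\Lambda)}$ across hyperplanes bisecting nearest-neighbour bonds \cite{DLS,KLS}. Gaussian domination yields the Dyson--Lieb--Simon infrared bound on the Duhamel two-point function of the staggered transverse spin operators,
\[
\bigl(\tilde{S}_k^{(1)},\tilde{S}_{-k}^{(1)}\bigr)_{0,\beta}^{(\Lambda)} \le \frac{1}{2\beta\,E_k},\qquad E_k:=2\sum_{\mu=1}^d(1-\cos k_\mu),
\]
where $\tilde{S}_k^{(1)}:=|\Lambda|^{-1/2}\sum_{x\in\Lambda}e^{-ik\cdot x}(-1)^{x^{(1)}+\cdots+x^{(d)}}S_x^{(1)}$. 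To convert this Duhamel bound into a bound on the ordinary thermal two-point function I would use the Falk--Bruch inequality together with the uniform double-commutator estimate $\langle[\tilde{S}_{-k}^{(1)},[H_0^{(\Lambda)},\tilde{S}_k^{(1)}]]\rangle_{0,\beta}^{(\Lambda)}\le C\,E_k$ (valid uniformly in $\Lambda$ and $\beta$ by the single-site spin bound). The outcome is $\langle\tilde{S}_k^{(1)}\tilde{S}_{-k}^{(1)}\rangle_{0,\beta}^{(\Lambda)}\le C'/|k|^2$ uniformly in $\Lambda$ for $k$ near $0$, and this bound passes to the infinite-volume state $\rho_{0,\beta}$.

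Next I would translate the momentum-space bound back to position space. Writing $f(x-y):=(-1)^{x^{(1)}+\cdots+x^{(d)}}(-1)^{y^{(1)}+\cdots+y^{(d)}}\rho_{0,\beta}(S_x^{(1)}S_y^{(1)})$, Assumption~\ref{Assumpnonzero} says $|f(x-y)|\sim K(\beta)|x-y|^{-(d-2+\eta)}$ at large distance. If $\eta<0$, then the structure factor $\hat f(k)$ would be singular of order $|k|^{-(2-\eta)}$ as $k\to 0$, strictly stronger than the $|k|^{-2}$ ceiling derived above. Testing both sides against a smooth bump supported on $|k|\sim R^{-1}$ and letting $R\to\infty$ yields a contradiction, so $\eta\ge 0$, and combined with $\eta\le 0$ this gives $\eta=0$.

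The main obstacle will be this last step, since Assumption~\ref{Assumpnonzero} only controls $f$ to leading order at large distance and says nothing quantitative about short-distance behaviour or subleading corrections. I would handle this with a smooth spatial decomposition $f=f_{\mathrm{short}}+f_{\mathrm{large}}$: the Fourier transform $\hat f_{\mathrm{short}}$ is bounded and cannot affect the low-$k$ singularity, so the singular behaviour of $\hat f$ near $k=0$ is dictated by $f_{\mathrm{large}}$, which is in turn controlled by the prescribed power-law asymptotic. Making this Fourier analysis quantitative while carefully tracking subleading corrections to the ansatz is the most delicate part of the argument.
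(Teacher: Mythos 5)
Your overall skeleton --- combining the bound $\eta\le 0$ from Theorem~2.5 of \cite{Koma} with an upper bound on the transverse two-point function obtained from the Dyson--Lieb--Simon infrared bound, the Falk--Bruch inequality, and a double-commutator estimate of order $E_k$ --- is exactly the paper's strategy. (The paper works in real space, with $A_\Omega=\sum_{x\in\Omega}S_x^{(1)}$ on a box of side $R$ and the test functions $h_i=-\Delta^{-1}\partial_i^\ast\chi_\Omega$, rather than with $\tilde S_k$ at $|k|\sim R^{-1}$; this is the Fourier dual of the same computation.) However, there is a genuine gap at the step you yourself flag as delicate, and it is not the one you name. Assumption~\ref{Assumpnonzero} controls only the \emph{modulus}: $|\rho_{0,\beta}(S_x^{(1)}S_y^{(1)})|\sim K(\beta)|x-y|^{-(d-2+\eta)}$. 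From this alone you cannot conclude that $\hat f(k)$ is singular of order $|k|^{-(2-\eta)}$ as $k\to 0$, nor that $\sum_{x,y\in\Omega}f(x-y)$ grows like $R^{d+2-\eta}$: if the staggered correlation $f$ changed sign at large distances, the oscillations could cancel, $\hat f$ could remain bounded near $k=0$, and no contradiction with the $|k|^{-2}$ ceiling would arise. Your proposed remedy (splitting off $f_{\rm short}$ and tracking subleading corrections) does not touch this, because the problem is the undetermined sign of the leading term itself.

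The missing ingredient is the Griffiths-type positivity that the paper establishes as a separate Lemma before anything else: after the unitary sublattice rotation by $\pi$ about the $2$-axis on the odd sublattice, the Hamiltonian becomes $\tilde H^{(\Lambda)}(B)$ of (\ref{tildeH}); a Trotter factorization then shows that $e^{-\beta\tilde H^{(\Lambda)}(B)}$ has nonnegative matrix elements in the standard $S^{(3)}$-basis (because $-\tilde H_{\rm XY}^{(\Lambda)}$, rewritten via $S^{(\pm)}$, has nonnegative entries), whence
\begin{equation}
(-1)^{x^{(1)}+\cdots+x^{(d)}}(-1)^{y^{(1)}+\cdots+y^{(d)}}\langle S_x^{(1)}S_y^{(1)}\rangle_{B,\beta}^{(\Lambda)}\ge 0 .
\end{equation}
This pointwise positivity is precisely what converts the asymptotics of Assumption~\ref{Assumpnonzero} into the lower bound ${\cal C}_4R^{d+2-\eta}\le\tilde\rho_{0,\beta}(A_\Omega^2)$, against which the upper bound of order $R^{d+2}$ from the infrared and Falk--Bruch inequalities forces $\eta\ge 0$; it also supplies directly the staggered sign structure asserted in the theorem (which you instead attribute to a sum-rule argument from \cite{Koma}). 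One further technicality that your momentum-space formulation conveniently sidesteps: with periodic boundary conditions the paper must replace $\chi_\Omega$ by $\chi_\Omega-(|\Omega|/|\Lambda|)\chi_\Lambda$ so that the test function has zero mean, and then show that the resulting cross terms vanish using the absence of transverse long-range order; for $\tilde S_k$ with $k\ne 0$ the zero-mean condition holds automatically.
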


\begin{theorem}
\label{theorem:zero}
Under Assumption~\ref{Assumpzero}, the following is valid: 
If the spontaneous magnetization $m_{\rm s}$ is nonvanishing at zero temperature in dimensions $d\ge 3$, 
then the corresponding transverse correlation decays by power law,   
\begin{equation}
\omega_0(S_x^{(1)}S_y^{(1)})\sim (-1)^{x^{(1)}+\cdots +x^{(d)}}(-1)^{y^{(1)}+\cdots +y^{(d)}}
\frac{K_0}{|x-y|^{d-1}},
\end{equation}
for a large $|x-y|$, where $K_0$ is a positive constant. Namely, the exponent is given by $\eta'=1$. 
\end{theorem}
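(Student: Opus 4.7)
The plan is to combine an upper bound on the staggered Fourier transform of the transverse ground-state correlation, derived from reflection positivity at zero temperature, with the shape imposed by the power-law ansatz in position space, and then to use the bound $\eta'\le 1$ already recorded in the remark following Assumption~\ref{Assumpzero}. Introduce the staggered transverse correlation
\[
g(z):=(-1)^{z^{(1)}+\cdots+z^{(d)}}\,\omega_0\bigl(S_0^{(1)}S_z^{(1)}\bigr),\qquad z\in\mathbb{Z}^d,
\]
so that Theorem~\ref{theorem:zero} is equivalent to $g(z)\sim K_0/|z|^{d-1}$ with $K_0>0$ for large $|z|$.

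For the upper bound, the starting point is the Gaussian-domination infrared bound of Dyson, Lieb, and Simon applied to the reflection-positive Hamiltonian $H_0^{(\Lambda)}$ at inverse temperature $\beta$ and external field $B$, followed by the zero-temperature refinement of Kennedy, Lieb, and Shastry \cite{KLS,KLS2}. After the limits $\beta\to\infty$, $\Lambda\nearrow\mathbb{Z}^d$, and $B\searrow 0$ have been taken in the order prescribed in \cite{Koma}, the outcome is an inequality of the form
\[
\hat g(k)\;\le\;\frac{C}{|k|}\qquad\text{for small $|k|$,}
\]
where $\hat g$ is the Fourier transform of $g$ on the Brillouin zone and $C$ depends only on model parameters. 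This is the natural $T=0$ analogue of the bound $\hat g_\beta(k)\le C/(\beta|k|^2)$ that underlies proofs of N\'eel order at $T>0$.

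To match the power law, Assumption~\ref{Assumpzero} gives $|g(z)|\sim\tilde K /|z|^{d-2+\eta'}$ for large $|z|$. A dyadic Fourier-scaling argument (test $\hat g$ against a smooth bump in $k$ of width $\sim R^{-1}$ and, via Plancherel, rewrite the pairing as a weighted spatial average of $g$ over balls of radius $\sim R$) yields $\hat g(k)\sim C'\tilde K\,|k|^{\eta'-2}$ at small $|k|$. Comparison with the infrared upper bound forces $\eta'-2\ge -1$, i.e.\ $\eta'\ge 1$; combined with $\eta'\le 1$ from the remark after Assumption~\ref{Assumpzero}, this gives $\eta'=1$. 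The staggered sign pattern of the leading term is automatic from the definition of $g$, and the positivity of $K_0$ follows from a Bogoliubov-type sum rule: the hypothesis $m_{\rm s}>0$ produces a matching lower bound $\hat g(k)\ge c/|k|$ for small $|k|$, preventing the coefficient of $|z|^{-(d-1)}$ from degenerating to zero.

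The main obstacle is the clean passage from the large-$|z|$ asymptotic of $g(z)$ to the small-$|k|$ asymptotic of $\hat g(k)$ without any regularity, monotonicity, or definite-sign control on $g$ at finite distances. A literal Tauberian theorem is unavailable, so the plan is to work entirely with integrated estimates against smooth, compactly supported test functions in $k$, testing both the infrared upper bound and the power-law ansatz on the same spatial scale $R$ and then comparing the resulting $R$-dependences. A secondary technical point, already present in \cite{Koma}, is that reflection positivity must be preserved through the iterated $\beta\to\infty$, $\Lambda\nearrow\mathbb{Z}^d$, $B\searrow 0$ limits; this is handled by proving the infrared inequality at finite $\beta$, $\Lambda$, and $B$ and only afterwards letting the parameters tend to their limiting values in the prescribed order.
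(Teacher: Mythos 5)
Your overall strategy is the same as the paper's: a zero-temperature infrared bound of order $|k|^{-1}$ (equivalently, a bound of order $R^{d+1}$ on the block-spin fluctuation over a cube of side $R$), compared against the scaling forced by the power-law ansatz, plus the bound $\eta'\le 1$ imported from Theorem~2.3 of \cite{Koma}. The paper realizes your ``integrated estimate on scale $R$'' concretely: it takes $A_\Omega=\sum_{x\in\Omega}S_x^{(1)}$, bounds the Duhamel function $b_{B,\beta}^{(\Lambda)}(A_\Omega)\le\beta^{-1}{\cal C}_2R^{d+2}$ via the choice $h_i=-\Delta^{-1}\partial_i^\ast\chi_\Omega$, combines this with the Falk--Bruch inequality and the double-commutator bound $c\le{\cal C}_3R^d$ so that $\lim_{\beta\to\infty}g\le\frac12\sqrt{{\cal C}_2{\cal C}_3}\,R^{d+1}$ (the Neves--Perez refinement you allude to), and compares with the lower bound ${\cal C}_4R^{d+2-\eta'}\le\tilde\omega_0(A_\Omega^2)$.

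The genuine gap is in that last lower bound, which is exactly the step you flag as ``the main obstacle'' and then do not resolve. Assumption~\ref{Assumpzero} controls only $|\omega_0(S_x^{(1)}S_y^{(1)})|$; without sign information, the double sum $\sum_{x,y\in\Omega}(-1)^{\cdots}\omega_0(S_x^{(1)}S_y^{(1)})$ (equivalently, your smoothed pairing of $\hat g$ with a bump of width $R^{-1}$) could be much smaller than $R^{d+2-\eta'}$ due to cancellations, and then the comparison with the $R^{d+1}$ upper bound yields no constraint on $\eta'$. Testing against smooth bumps does not circumvent this: the issue is not regularity but the possibility of oscillating signs, and no Tauberian or dyadic device converts a bound on $|g|$ into a lower bound on $\hat g$. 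The paper supplies the missing ingredient as a separate lemma: after the sublattice $\pi$-rotation about the $2$-axis, $e^{-\beta\tilde H^{(\Lambda)}(B)}$ has nonnegative matrix elements in the standard basis (via a Trotter decomposition into the XY and Ising parts), whence
\begin{equation}
(-1)^{x^{(1)}+\cdots+x^{(d)}}(-1)^{y^{(1)}+\cdots+y^{(d)}}\langle S_x^{(1)}S_y^{(1)}\rangle_{B,\beta}^{(\Lambda)}\ge 0 ,
\end{equation}
and this pointwise positivity survives the $\beta\to\infty$, $\Lambda\nearrow\ze^d$, $B\searrow 0$ limits. With it, the power-law ansatz immediately gives $\tilde\omega_0(A_\Omega^2)\ge{\cal C}_4R^{d+2-\eta'}$ and the argument closes. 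Your proposal needs this positivity lemma (or an equivalent sign constraint) to be a proof; the remaining steps are essentially correct and parallel to the paper's.
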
 

\Section{Non-zero temperatures: Proof of Theorem~\ref{theorem:nonzero}}
\label{Sec:nonzero}

Consider first the case of non-zero temperatures, and we will give a proof of Theorem~\ref{theorem:nonzero} 
in this section. 
To begin with, we recall the previous result of \cite{Koma} about the transverse correlation.  
{From} Theorem~2.5 of \cite{Koma}, the rapid decay $o(|x-y|^{-(d-2)})$ for the transverse correlation is excluded.  
This implies that the exponent $\eta$ of (\ref{CorDecayAssumpnonzero}) must satisfy $2-d<\eta\le 0$. 
Therefore, in order to prove Theorem~\ref{theorem:nonzero}, it is sufficient to show $\eta\ge 0$ except for 
the prefactor. 

\begin{lemma}
The following bound is valid: 
\begin{equation}
\label{S1-cor}
(-1)^{x^{(1)}+\cdots +x^{(d)}}(-1)^{y^{(1)}+\cdots +y^{(d)}}\langle S_x^{(1)}S_y^{(1)}\rangle_{B,\beta}^{(\Lambda)}\ge 0
\end{equation}
for any $x,y\in\Lambda$. 
\end{lemma}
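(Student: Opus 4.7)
The plan is to combine a Marshall-type sublattice rotation with a Perron--Frobenius positivity argument in the $S^{(3)}$-eigenbasis. Writing $\varepsilon_x := (-1)^{x^{(1)}+\cdots+x^{(d)}}$, I introduce the unitary
\[
V := \prod_{x\in\Lambda:\;\varepsilon_x=-1} e^{i\pi S_x^{(3)}},
\]
i.e., a $\pi$-rotation about the $3$-axis on the sites of the odd sublattice. Elementary commutator identities give $V S_x^{(3)} V^\dagger = S_x^{(3)}$ and $V S_x^{(\pm)} V^\dagger = \varepsilon_x S_x^{(\pm)}$, so in particular $V S_x^{(1)} V^\dagger = \varepsilon_x S_x^{(1)}$. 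Since $\varepsilon_x\varepsilon_y=-1$ on every nearest-neighbour bond, conjugation by $V$ flips the sign of the exchange in the $1$-$2$ plane:
\[
V\,\mathbf{S}_x\cdot\mathbf{S}_y\,V^\dagger = S_x^{(3)} S_y^{(3)} - \frac{1}{2}\bigl(S_x^+ S_y^- + S_x^- S_y^+\bigr), \qquad |x-y|=1,
\]
while $V O^{(\Lambda)} V^\dagger = O^{(\Lambda)}$ because each $S_x^{(3)}$ is invariant under $V$. Thus the transformed Hamiltonian $\tilde{H}^{(\Lambda)}(B) := V H^{(\Lambda)}(B) V^\dagger$ has non-positive off-diagonal matrix entries in the tensor-product basis $\{|\mathbf{m}\rangle\}$ of $S^{(3)}$-eigenstates.

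Using cyclicity of the trace together with the covariance $V S_x^{(1)} S_y^{(1)} V^\dagger = \varepsilon_x\varepsilon_y\, S_x^{(1)} S_y^{(1)}$ and $Z_{B,\beta}^{(\Lambda)} = \mathrm{Tr}\,e^{-\beta \tilde{H}^{(\Lambda)}(B)}$, one obtains
\[
\varepsilon_x\varepsilon_y\, \langle S_x^{(1)} S_y^{(1)}\rangle_{B,\beta}^{(\Lambda)} = \frac{1}{Z_{B,\beta}^{(\Lambda)}}\,\mathrm{Tr}\bigl[e^{-\beta \tilde{H}^{(\Lambda)}(B)} S_x^{(1)} S_y^{(1)}\bigr],
\]
so it suffices to show that this trace is non-negative. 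In the $S^{(3)}$-eigenbasis the raising and lowering operators $S_x^{\pm}$ have non-negative matrix entries, hence so does $S_x^{(1)} S_y^{(1)} = \frac{1}{4}(S_x^+ + S_x^-)(S_y^+ + S_y^-)$. A Perron--Frobenius argument yields the same conclusion for $e^{-\beta \tilde{H}^{(\Lambda)}(B)}$: choose $c\in\re$ so large that $M := cI - \tilde{H}^{(\Lambda)}(B)$ has all entries non-negative---possible precisely because the off-diagonal entries of $\tilde{H}^{(\Lambda)}(B)$ are non-positive---and observe that
\[
e^{-\beta \tilde{H}^{(\Lambda)}(B)} = e^{-\beta c}\,\sum_{k\ge 0}\frac{\beta^k}{k!}\, M^k
\]
is then an entry-wise non-negative matrix for every $\beta > 0$. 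Expanding the trace as $\sum_{\mathbf{m},\mathbf{m}'}\langle \mathbf{m}|e^{-\beta \tilde{H}^{(\Lambda)}(B)}|\mathbf{m}'\rangle\langle \mathbf{m}'|S_x^{(1)} S_y^{(1)}|\mathbf{m}\rangle$ expresses it as a sum of products of non-negative real numbers, which proves (\ref{S1-cor}).

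The argument is essentially Marshall's sign theorem for the Heisenberg antiferromagnet, extended to include the staggered external field, and works uniformly in $B\in\re$ and $\beta>0$. The main bookkeeping subtlety---and the reason for choosing the rotation axis to be the $3$-axis rather than the $1$- or $2$-axis---is that one needs \emph{simultaneously} the transformed Hamiltonian to have non-positive off-diagonal entries in the chosen basis and the staggered field term $-B O^{(\Lambda)}$ to remain invariant, so that no unwanted off-diagonal contributions are generated alongside the Heisenberg exchange. Beyond this sign bookkeeping, everything reduces to the standard positivity of matrix exponentials of Metzler-type matrices.
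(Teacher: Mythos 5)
Your proof is correct and follows essentially the same route as the paper: a sublattice $\pi$-rotation that makes the off-diagonal part of the Hamiltonian entrywise non-positive in the $S^{(3)}$ product basis, followed by the observation that $e^{-\beta \tilde{H}^{(\Lambda)}(B)}$ and $S_x^{(1)}S_y^{(1)}$ then both have non-negative matrix elements, so the trace of their product is non-negative. The only (immaterial) differences are that the paper rotates about the $2$-axis---so the staggered field becomes a uniform field, which it reuses later for the reflection-positivity bounds---and it establishes the entrywise positivity of $e^{-\beta\tilde{H}^{(\Lambda)}(B)}$ via a Trotter product formula rather than your expansion of $e^{\beta(cI-\tilde{H}^{(\Lambda)}(B))}$.
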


\begin{proof}{Proof} We first introduce a unitary transformation $U$ which causes $\pi$ rotation of the spins ${\bf S}_x$ 
about the 2 axis for the sites $x$ with odd $(x^{(1)}+\cdots + x^{(d)})$. As a result, the Hamiltonian is transformed into \cite{DLS} 
\begin{equation}
\label{tildeH}
\tilde{H}^{(\Lambda)}(B):=U^\ast H^{(\Lambda)}(B)U=\sum_{|x-y|=1}(-S_x^{(1)}S_y^{(1)}+S_x^{(2)}S_y^{(2)}
-S_x^{(3)}S_y^{(3)})-B\sum_x S_x^{(3)}. 
\end{equation}
The left-hand side of (\ref{S1-cor}) can be written 
\begin{equation}
\label{S1tildeCor}
(-1)^{x^{(1)}+\cdots +x^{(d)}}(-1)^{y^{(1)}+\cdots +y^{(d)}}\langle S_x^{(1)}S_y^{(1)}\rangle_{B,\beta}^{(\Lambda)}
=\langle\!\langle S_x^{(1)}S_y^{(1)}\rangle\!\rangle_{B,\beta}^{(\Lambda)},
\end{equation}
where we have written  
$$
\langle\!\langle \cdots\rangle\!\rangle_{B,\beta}^{(\Lambda)}
:=\frac{1}{\tilde{Z}_{B,\beta}^{(\Lambda)}}{\rm Tr}(\cdots)\; e^{-\beta \tilde{H}^{(\Lambda)}(B)}
$$
with $\tilde{Z}_{B,\beta}^{(\Lambda)}:={\rm Tr}\;e^{-\beta \tilde{H}^{(\Lambda)}(B)}$. 
Therefore, it is sufficient to show that the right-hand side of (\ref{S1tildeCor}) is nonnegative. 

The Hamiltonian $\tilde{H}^{(\Lambda)}(B)$ of (\ref{tildeH}) can be decomposed into two parts, 
$$
\tilde{H}^{(\Lambda)}(B)=\tilde{H}_{\rm XY}^{(\Lambda)}+\tilde{H}_{\rm Ising}^{(\Lambda)}(B),
$$
with 
$$
\tilde{H}_{\rm XY}^{(\Lambda)}:=-\sum_{|x-y|=1}(S_x^{(1)}S_y^{(1)}-S_x^{(2)}S_y^{(2)})
$$
and 
$$
\tilde{H}_{\rm Ising}^{(\Lambda)}(B):=-\sum_{|x-y|=1}S_x^{(3)}S_y^{(3)}-B\sum_x S_x^{(3)}.
$$
As usual, we write $S_x^{(\pm)}:=S_x^{(1)}\pm iS_x^{(2)}$. Then, one has 
\begin{equation}
S_x^{(1)}S_y^{(1)}-S_x^{(2)}S_y^{(2)}=\frac{1}{2}\left[S_x^{(+)}S_y^{(+)}+S_x^{(-)}S_y^{(-)}\right]. 
\end{equation}
All the matrix elements of this right-hand side are nonnegative in the usual real, orthonormal basis 
which diagonalizes $S_x^{(3)}$. Therefore, all the matrix elements of $-\tilde{H}_{\rm XY}^{(\Lambda)}$ 
are nonnegative. On the other hand, one has 
$$
e^{-\beta\tilde{H}^{(\Lambda)}(B)}=\lim_{M\rightarrow\infty}\left[e^{-\beta\tilde{H}_{\rm XY}^{(\Lambda)}/M}
e^{-\beta\tilde{H}_{\rm Ising}^{(\Lambda)}(B)/M}\right]^M
$$
{from} the above decomposition of the Hamiltonian $\tilde{H}^{(\Lambda)}(B)$. Therefore, from the above observation, 
this left-hand side $e^{-\beta\tilde{H}^{(\Lambda)}(B)}$ has only nonnegative matrix elements. 
Since all the matrix elements of $S_x^{(1)}$ are also nonnegative in the same basis, these observations imply 
the right-hand side of (\ref{S1tildeCor}) is nonnegative.   
\end{proof}

By using the assumption (\ref{CorDecayAssumpnonzero}), the positivity (\ref{S1-cor}) and the relation (\ref{S1tildeCor}), we have 
\begin{equation}
\label{lboundCor}
(-1)^{x^{(1)}+\cdots +x^{(d)}}(-1)^{y^{(1)}+\cdots +y^{(d)}}\rho_{0,\beta}(S_x^{(1)}S_y^{(1)})
=\tilde{\rho}_{0,\beta}(S_x^{(1)}S_y^{(1)})\ge \frac{{\cal C}_1}{|x-y|^{d-2+\eta}}
\end{equation}
for any $x,y$ satisfying $|x-y|\ge r_0$ with a large positive constant $r_0$, where 
${\cal C}_1$ is a positive constant which may depend on $\beta$, and we have written 
\begin{equation}
\label{tilderho}
\tilde{\rho}_{0,\beta}(\cdots):={\rm weak}^\ast\mbox{-}\lim_{B\searrow 0}{\rm weak}^\ast\mbox{-}\lim_{\Lambda\nearrow \ze^d}
\langle\!\langle\cdots\rangle\!\rangle_{B,\beta}^{(\Lambda)}.
\end{equation}

For an operator $A$, we introduce three quantities as \cite{DLS}  
\begin{equation}
\label{g}
g_{B,\beta}^{(\Lambda)}(A):=\frac{1}{2}\left[\langle\!\langle A A^\ast \rangle\!\rangle_{B,\beta}^{(\Lambda)}+
\langle\!\langle A^\ast A \rangle\!\rangle_{B,\beta}^{(\Lambda)}\right],
\end{equation}
$$
b_{B,\beta}^{(\Lambda)}(A):=(A^\ast,A)_{B,\beta}^{(\Lambda)}:=\frac{1}{\tilde{Z}_{B,\beta}^{(\Lambda)}}\int_0^1 ds\; 
{\rm Tr}\left[A^\ast\; e^{-s \beta \tilde{H}^{(\Lambda)}(B)}\; A\; 
e^{-(1-s)\beta \tilde{H}^{(\Lambda)}(B)}\right]
$$
and 
\begin{equation}
\label{c}
c_{B,\beta}^{(\Lambda)}(A):=\langle\!\langle[A^\ast,[\tilde{H}^{(\Lambda)}(B),A]]\rangle\!\rangle_{B,\beta}^{(\Lambda)}.
\end{equation}
The method of the reflection positivity \cite{DLS} is applicable to the present system with the Hamiltonian $\tilde{H}^{(\Lambda)}(B)$. 
As a result, the following bound \cite{DLS} is valid:  
\begin{equation}
\label{DLSIRbound}
\Bigl(\sigma\Bigl(\sum_{i=1}^d \overline{\partial_i h_i}\Bigr),
\sigma\Bigl(\sum_{i=1}^d \partial_i h_i\Bigr)\Bigr)_{B,\beta}^{(\Lambda)}
\le \beta^{-1}\sum_{i=1}^d \sum_{x\in \Lambda}|h_i(x)|^2
\end{equation}
where $h_i$ are complex-valued functions on $\Lambda$, $\partial_i h_j(x):=h_j(x+e_i)-h_j(x)$ with 
the unit lattice vector $e_i$ whose $k$-th component is given by $e_i^{(k)}=\delta_{i,k}$, and 
$$
\sigma(f)=\sum_{x\in \Lambda}f(x)S_x^{(1)} 
$$
for a function $f$ on $\Lambda$. Here, $\overline{\cdots}$ denotes complex conjugate. 
In addition to this, the function $b_{B,\beta}^{(\Lambda)}(A)$ satisfies \cite{FB} 
\begin{equation}
\label{bpboundgpcp}
b_{B,\beta}^{(\Lambda)}(A)\ge \frac{4[g_{B,\beta}^{(\Lambda)}(A)]^2}{4g_{B,\beta}^{(\Lambda)}(A)+\beta c_{B,\beta}^{(\Lambda)}(A)}, 
\end{equation}
where we have used the inequalities (34) and (A10) in \cite{DLS}, and 
$$
t^{-1}(1-e^{-t})\ge (1+t)^{-1} \quad \mbox{for \ } t>0.
$$
Using the inequality (\ref{bpboundgpcp}), the function $g_{B,\beta}^{(\Lambda)}(A)$ is estimated as  
\begin{equation}
\label{gbound}
g_{B,\beta}^{(\Lambda)}(A)\le \frac{1}{2}\left\{b_{B,\beta}^{(\Lambda)}(A)
+\sqrt{[b_{B,\beta}^{(\Lambda)}(A)]^2+\beta b_{B,\beta}^{(\Lambda)}(A) c_{B,\beta}^{(\Lambda)}(A)}\right\}. 
\end{equation}

First, we recall the following well known facts \cite{FSS}: Note that 
\begin{eqnarray*}
\langle \partial_i\varphi,\psi\rangle &=&\sum_{x\in \ze^d}[\overline{\varphi(x+e_i)}-\overline{\varphi(x)}]\psi(x)\\
&=&\sum_x \overline{\varphi(x+e_i)}\psi(x) - \sum_x \overline{\varphi(x)}\psi(x)\\
&=&\sum_x \overline{\varphi(x)}\psi(x-e_i)- \sum_x \overline{\varphi(x)}\psi(x)\\
&=&-\sum_x \overline{\varphi(x)}[\psi(x)-\psi(x-e_i)]
\end{eqnarray*}
for functions, $\varphi$ and $\psi$, on $\ze^d$ with a compact support. Therefore, the adjoint $\partial_i^\ast$ of 
$\partial_i$ is given by 
$$
\partial_i^\ast \psi(x)=-[\psi(x)-\psi(x-e_i)].
$$
The corresponding Laplacian $\Delta$ is given by 
$$
\Delta \varphi(x)=\sum_{i=1}^d [\varphi(x+e_i)+\varphi(x-e_i)-2\varphi(x)].
$$
Actually, one has 
$$
-\Delta=\sum_{i=1}^d \partial_i^\ast \partial_i =\sum_{i=1}^d \partial_i\partial_i^\ast. 
$$
The inverse of $\Delta$ is given by 
$$
\Delta^{-1}(x,y)=\frac{1}{(2\pi)^d}\int_{-\pi}^\pi dk_1\cdots \int_{-\pi}^\pi dk_d 
\frac{e^{ik(x-y)}}{E_k},
$$
where 
$$
E_k:= 2\sum_{i=1}^d[\cos k_i-1]. 
$$
Clearly, $\Delta^{-1}$ is well defined for $d\ge 3$. 

Following \cite{FSS}, we choose $h_i=-\Delta^{-1}\partial_i^\ast \chi_\Omega$ 
with the characteristic function $\chi_\Omega$ of a cube $\Omega$ with the sidelength $R$. Then, one has 
\begin{equation}
\sum_{i=1}^d \partial_ih_i=-\sum_{i=1}^d \partial_i\Delta^{-1}\partial_i^\ast \chi_\Omega
=\chi_\Omega.
\end{equation}
This implies 
\begin{equation}
\label{sigmaparhi}
\sigma\left(\sum_{i=1}^d \partial_i h_i\right)=\sigma(\chi_\Omega)
=\sum_{x\in\ze^d} \chi_\Omega(x)S_x^{(1)}=\sum_{x\in\Omega}S_x^{(1)}.
\end{equation}
Further,
\begin{eqnarray}
\sum_{i=1}^d \sum_{x\in\ze^d} |h_i(x)|^2
&=&\sum_{i=1}^d \langle \Delta^{-1}\partial_i^\ast\chi_\Omega, \Delta^{-1}\partial_i^\ast\chi_\Omega\rangle \nonumber\\
&=&\sum_{i=1}^d \langle \chi_\Omega,\partial_i\Delta^{-2}\partial_i^\ast\chi_\Omega\rangle\nonumber\\
&=& \langle\chi_\Omega,(-\Delta^{-1})\chi_\Omega\rangle. 
\end{eqnarray}
Since $(-\Delta^{-1})(x,y)\sim |x-y|^{2-d}$ for a large $|x-y|$, 
\begin{equation}
\label{Laplaceinvbound}
\langle\chi_\Omega,(-\Delta^{-1})\chi_\Omega\rangle\le {\cal C}_2 R^{d+2} 
\end{equation}
for a large sidelength $R$ of the box $\Omega$, where ${\cal C}_2$ is a positive constant. 

Next, in order to handle the finite lattice $\Lambda$ with the periodic boundary condition, 
we set 
\begin{equation}
\label{hLambda}
h^{(\Lambda)}:=\chi_\Omega-\frac{|\Omega|}{|\Lambda|}\chi_\Lambda
\end{equation}
for $\Omega\subset\Lambda$. Clearly, one has $\sum_{x\in\Lambda}h^{(\Lambda)}(x)=0$. Therefore, one can define 
\begin{equation}
h_i^{(\Lambda)}:=-(\Delta^{(\Lambda)})^{-1}\partial_i^\ast h^{(\Lambda)},
\end{equation}
where $\Delta^{(\Lambda)}$ is the Laplacian for the finite lattice $\Lambda$ with the periodic boundary condition. 
Further, one obtains 
$$
\sum_i \partial_i h_i^{(\Lambda)}=h^{(\Lambda)}
$$
and 
$$
\sum_i \langle h_i^{(\Lambda)},h_i^{(\Lambda)}\rangle = \langle h^{(\Lambda)},[-(\Delta^{(\Lambda)})^{-1}]h^{(\Lambda)}\rangle.
$$
Substituting these into (\ref{DLSIRbound}), we have 
\begin{equation}
\label{DuhamLaplaceBound}
(\sigma(h^{(\Lambda)}),\sigma(h^{(\Lambda)}))_{B,\beta}^{(\Lambda)}
\le \beta^{-1} \langle h^{(\Lambda)},[-(\Delta^{(\Lambda)})^{-1}]h^{(\Lambda)}\rangle.
\end{equation}
In the right-hand side, 
\begin{equation}
\label{LaplaceLimit}
\langle h^{(\Lambda)},[-(\Delta^{(\Lambda)})^{-1}]h^{(\Lambda)}\rangle\rightarrow 
\langle \chi_\Omega,(-\Delta^{-1})\chi_\Omega\rangle \ \ \ \mbox{as} \ \ \Lambda\nearrow\ze^d.
\end{equation} 
Since $h^{(\Lambda)}$ is given by (\ref{hLambda}), the left-hand side is written 
\begin{eqnarray}
\label{Duhamdocom}
(\sigma(h^{(\Lambda)}),\sigma(h^{(\Lambda)}))_{B,\beta}^{(\Lambda)}
&=&(\sigma(\chi_\Omega),\sigma(\chi_\Omega))_{B,\beta}^{(\Lambda)}
-\frac{|\Omega|}{|\Lambda|}(\sigma(\chi_\Omega),\sigma(\chi_\Lambda))_{B,\beta}^{(\Lambda)}\nonumber \\
&-&\frac{|\Omega|}{|\Lambda|}(\sigma(\chi_\Lambda),\sigma(\chi_\Omega))_{B,\beta}^{(\Lambda)}
+\frac{|\Omega|^2}{|\Lambda|^2}(\sigma(\chi_\Lambda),\sigma(\chi_\Lambda))_{B,\beta}^{(\Lambda)}.
\end{eqnarray}
The first term in the right-hand side is written 
\begin{equation}
\label{Duhamdoco1st}
(\sigma(\chi_\Omega),\sigma(\chi_\Omega))_{B,\beta}^{(\Lambda)}
=(A_\Omega,A_\Omega)_{B,\beta}^{(\Lambda)}=b_{B,\beta}^{(\Lambda)}(A_\Omega),
\end{equation}
where we have written  
$$
A_\Omega:=\sigma(\chi_\Omega)=\sum_{x\in\Lambda}\chi_\Omega(x)S_x^{(1)}=\sum_{x\in\Omega}S_x^{(1)}
$$
for short. In order to estimate the second term in the right-hand side, we note that \cite{DLS} 
\begin{equation}
\label{Duhambound}
|(\sigma(\chi_\Omega),\sigma(\chi_\Lambda))_{B,\beta}^{(\Lambda)}|
\le \sqrt{\langle\!\langle \sigma(\chi_\Omega)^2\rangle\!\rangle_{B,\beta}^{(\Lambda)}}
\sqrt{\langle\!\langle\sigma(\chi_\Lambda)^2\rangle\!\rangle_{B,\beta}^{(\Lambda)}}
=\sqrt{\langle\!\langle A_\Omega^2\rangle\!\rangle_{B,\beta}^{(\Lambda)}}
\sqrt{\langle\!\langle A_\Lambda^2\rangle\!\rangle_{B,\beta}^{(\Lambda)}}. 
\end{equation}
Using the translation invariance and Schwarz inequality, we have 
$$
\frac{1}{|\Lambda|^2}\langle\!\langle A_\Lambda^2\rangle\!\rangle_{B,\beta}^{(\Lambda)}
=\frac{1}{|\Gamma||\Lambda|}\langle\!\langle A_\Gamma A_\Lambda\rangle\!\rangle_{B,\beta}^{(\Lambda)}
\le \frac{1}{|\Gamma||\Lambda|}\sqrt{\langle\!\langle A_\Gamma^2 \rangle\!\rangle_{B,\beta}^{(\Lambda)}}
\sqrt{\langle\!\langle A_\Lambda^2 \rangle\!\rangle_{B,\beta}^{(\Lambda)}}
$$
for $\Gamma\subset\Lambda$. This implies \cite{KomaTasaki} 
\begin{equation}
\frac{1}{|\Lambda|^2}\langle\!\langle A_\Lambda^2\rangle\!\rangle_{B,\beta}^{(\Lambda)}
\le \frac{1}{|\Gamma|^2}\langle\!\langle A_\Gamma^2\rangle\!\rangle_{B,\beta}^{(\Lambda)}.
\end{equation}
Therefore, 
\begin{eqnarray}
\label{LROinequa}
\frac{1}{|\Gamma|^2}\tilde{\rho}_{0,\beta}(A_\Gamma^2)&=&
{\rm weak}^\ast\mbox{-}\lim_{B\searrow 0}{\rm weak}^\ast\mbox{-}\lim_{\Lambda\nearrow\ze^d}
\frac{1}{|\Gamma|^2}\langle\!\langle A_\Gamma^2\rangle\!\rangle_{B,\beta}^{(\Lambda)}\nonumber\\
&\ge& {\rm weak}^\ast\mbox{-}\lim_{B\searrow 0}{\rm weak}^\ast\mbox{-}\lim_{\Lambda\nearrow\ze^d}
\frac{1}{|\Lambda|^2}\langle\!\langle A_\Lambda^2\rangle\!\rangle_{B,\beta}^{(\Lambda)}, 
\end{eqnarray}
where the two weak$^\ast$ limits are taken so that 
$\langle\!\langle\cdots\rangle\!\rangle_{B,\beta}^{(\Lambda)}$ converges to $\tilde{\rho}_{0,\beta}(\cdots)$.   
Further, by the argument in Sec.~7 of \cite{Koma}, we obtain  
$$
\lim_{\Gamma\nearrow\ze^d}\frac{1}{|\Gamma|^2}\tilde{\rho}_{0,\beta}(A_\Gamma^2)=0.
$$
Therefore, we have  
\begin{equation}
\label{NoLROtrns}
{\rm weak}^\ast\mbox{-}\lim_{B\searrow 0}{\rm weak}^\ast\mbox{-}\lim_{\Lambda\nearrow\ze^d}
\frac{1}{|\Lambda|^2}\langle\!\langle A_\Lambda^2\rangle\!\rangle_{B,\beta}^{(\Lambda)}=0. 
\end{equation}
{from} the above inequality (\ref{LROinequa}). Namely, the long-range order of the transverse correlations 
is vanishing. {From} (\ref{Duhambound}) and (\ref{NoLROtrns}), the second term in the right-hand side of  
(\ref{Duhamdocom}) is vanishing in the double weak$^\ast$ limit. Similarly, the third and fourth terms are also 
vanishing. Combining these observations, (\ref{DuhamLaplaceBound}), (\ref{LaplaceLimit}) and (\ref{Duhamdoco1st}), we obtain 
\begin{equation}
{\rm weak}^\ast\mbox{-}\lim_{B\searrow 0}{\rm weak}^\ast\mbox{-}\lim_{\Lambda\nearrow\ze^d}
b_{B,\beta}^{(\Lambda)}(A_\Omega)\le \beta^{-1}\langle \chi_\Omega,(-\Delta^{-1})\chi_\Omega\rangle. 
\end{equation}
The right-hand side is estimated by (\ref{Laplaceinvbound}). As a result, we have 
\begin{equation}
\label{bbound}
{\rm weak}^\ast\mbox{-}\lim_{B\searrow 0}{\rm weak}^\ast\mbox{-}\lim_{\Lambda\nearrow\ze^d}
b_{B,\beta}^{(\Lambda)}(A_\Omega)
\le {\cal C}_2\beta^{-1}R^{d+2}.
\end{equation}
{From} (\ref{c}), one has 
\begin{equation}
{\rm weak}^\ast\mbox{-}\lim_{B\searrow 0}{\rm weak}^\ast\mbox{-}\lim_{\Lambda\nearrow\ze^d}
c_{B,\beta}^{(\Lambda)}(A_\Omega)\le {\cal C}_3 R^d.  
\end{equation}
Further, from (\ref{tilderho}) and (\ref{g}), we have 
\begin{equation}
\tilde{\rho}_{0,\beta}(A_\Omega^2)=
{\rm weak}^\ast\mbox{-}\lim_{B\searrow 0}{\rm weak}^\ast\mbox{-}\lim_{\Lambda\nearrow\ze^d}g_{B,\beta}^{(\Lambda)}(A_\Omega). 
\end{equation}
This left-hand side satisfies 
\begin{equation}
{\cal C}_4R^{d+2-\eta}\le \tilde{\rho}_{0,\beta}(A_\Omega^2)
\end{equation}
{from} the lower bound for the correlation (\ref{lboundCor}). Substituting these into the inequality (\ref{gbound}), 
we obtain 
\begin{equation}
{\cal C}_4R^{d+2-\eta}\le \frac{1}{2}\left[{\cal C}_2\beta^{-1}R^{d+2} 
+\sqrt{{\cal C}_2^2\beta^{-2}R^{2(d+2)}+{\cal C}_2R^{d+2}\cdot{\cal C}_3R^d}\right]
\end{equation}
for a large $R$. This implies $\eta\ge 0$. Since $\eta\le 0$ as mentioned above \cite{Koma}, we obtain $\eta=0$. 

\Section{Zero temperature: Proof of Theorem~\ref{theorem:zero}}

Next consider the transverse correlation at zero temperature. To begin with, we recall that 
the ground-state expectation value is given by 
\begin{equation}
\omega_B^{(\Lambda)}(\cdots)=
\lim_{\beta\rightarrow\infty}\langle\cdots\rangle_{B,\beta}^{(\Lambda)}=\lim_{\beta\rightarrow\infty}\frac{1}{Z_{B,\beta}^{(\Lambda)}}
{\rm Tr}\;(\cdots)e^{-\beta H^{(\Lambda)}(B)}. 
\end{equation}
{From} the bound (\ref{DLSIRbound}), one has 
$$
\lim_{\beta\rightarrow \infty}b_{B,\beta}^{(\Lambda)}(A_\Omega)=0.
$$
Therefore, the key bound (\ref{gbound}) is replaced by \cite{JNFP}
\begin{equation}
\lim_{\beta\rightarrow\infty}g_{B,\beta}^{(\Lambda)}(A_\Omega)
\le \lim_{\beta\rightarrow\infty}\frac{1}{2}\sqrt{\beta b_{B,\beta}^{(\Lambda)}(A_\Omega) c_{B,\beta}^{(\Lambda)}(A_\Omega)}. 
\end{equation}
The same argument as in the case of non-zero temperatures yields 
\begin{equation}
{\cal C}_4R^{d+2-\eta'}\le \frac{1}{2}\sqrt{{\cal C}_2{\cal C}_3}R^{d+1}
\end{equation}
for a large $R$. Here, the positive constant ${\cal C}_4$ may be different from that in 
the case of non-zero temperatures. 
This inequality implies $\eta'\ge 1$. However, $\eta'\le 1$ from the result of \cite{Koma}. 
Thus, we obtain $\eta'=1$ at zero temperature. 

\Section{Other models}
\label{Sec:other}

In this section, we will treat the quantum XY model and the classical Heisenberg model in three or higher dimensions 
$d\ge 3$. For the transverse correlation at non-zero temperatures, we can obtain the same exponent $\eta=0$ as 
that of the quantum antiferromagnetic Heisenberg model under the power-law decay assumption for the transverse correlations.   
Basically, we use Bogoliubov inequalities, the bounds derived from the reflection positivity, and  
the Griffiths-type positivity of the correlations. 

\subsection{Quantum XY model}

The Hamiltonian of the quantum XY model is given by 
$$
H^{(\Lambda)}(B)= -\sum_{|x-y|=1}[S_x^{(1)}S_y^{(1)}+S_x^{(3)}S_y^{(3)}]-B\sum_x S_x^{(3)}. 
$$
This Hamiltonian is different from (\ref{tildeH}) in only the terms $S_x^{(2)}S_y^{(2)}$ about the second component of 
the spins. 
Therefore, both of the reflection positivity \cite{DLS} and the positivity of the transverse correlation hold also 
for the XY model. Clearly, the Bogoliubov inequality holds as well. Thus, the same argument yields 
the exponent $\eta=0$ in the case of non-zero temperatures.  

For the case of zero temperature, we can obtain the lower bound $\eta'\ge 1$ by taking the limit $\beta\rightarrow\infty$ 
in the same way. Unfortunately, we have not been able to find a useful analogue of 
the Kennedy-Lieb-Shastry inequality \cite{KLS} for the ground state of the XY model. 
Therefore, the upper bound of $\eta'$ is missing at zero temperature.

\subsection{Classical Heisenberg model}

The Hamiltonian of the classical Heisenberg model with $N$ component spins is given by \cite{FSS} 
$$
H^{(\Lambda)}(B)=-\sum_{|x-y|}\mbox{\boldmath $\sigma$}_x\cdot \mbox{\boldmath $\sigma$}_y
-B\sum_x \sigma^{(N)}_x, 
$$
where $\mbox{\boldmath $\sigma$}_x=(\sigma_x^{(1)},\sigma_x^{(2)},\ldots,\sigma_x^{(N)})\in\re^N$ with integer $N\ge 2$. 
The model satisfies the reflection positivity, and the positivity of the transverse correlations holds \cite{FSS}. 
Using these properties, the lower bound $\eta\ge 0$ was obtained for a correlation which decays by power law in \cite{FSS}, 
although the situation\footnote{For the field theoretical case, see \cite{GJ}.} which they considered 
is different from ours. The upper bound $\eta\le 0$ can be obtained by using 
the classical Bogoliubov inequality\footnote{The first classical analogs of Bogoliubov inequalities 
were found by Mermin \cite{Mermin}. For a more sophisticated treatment, see, e.g., Sec.~III.6 of the book \cite{Simon}.}
in the same way as in \cite{Koma}. Thus, $\eta=0$ at non-zero temperatures.



\begin{thebibliography}{99}

\bibitem{Nambu} Y. Nambu, {\it Axial Vector Current Conservations in Weak Interactions}, 
Phys. Rev. Lett. {\bf 4} (1960) 380--382. 

\bibitem{NJL} Y. Nambu, and G. Jona-Lasinio, {\it Dynamical Model of Elementary Particles Based on 
an Analogy with Superconductivity. I}, 
Phys. Rev. {\bf 122} (1961) 345--358. 

\bibitem{Goldstone} J. Goldstone, {\it Field Theories with $\langle\!\langle$Superconductor$\rangle\!\rangle$ Solutions}, 
Nuovo Cimento {\bf 19} (1961) 154--164. 

\bibitem{GSW} J. Goldstone, A. Salam, and S. Weinberg, {\it Broken Symmetries}, 
Phys. Rev. {\bf 127} (1962) 965--970. 

\bibitem{FSS} J. Fr\"ohlich, B. Simon, and T. Spencer, {\it Infrared bounds, phase transitions and continuous 
symmetry breaking},  
Commun. Math. Phys. {\bf 50} (1976) 79--95.

\bibitem{DLS} F. J. Dyson, E. H. Lieb, and B. Simon, {\it Phase Transitions in Quantum Spin Systems with 
Isotropic and Nonisotropic Interactions}, J. Stat. Phys. {\bf 18} (1978) 335--383.  

\bibitem{KLS} T. Kennedy, E. H. Lieb, and B. S. Shastry, {\it Existence of N\'eel Order in Some 
Spin-1/2 Heisenberg Antiferromagnets}, J. Stat. Phys. {\bf 53} (1988) 1019--1030.

\bibitem{KLS2} T. Kennedy, E. H. Lieb, and B. S. Shastry, {\it The XY Model has long-range order for 
all spins and all dimensions greater than one}, 
Phys. Rev. Lett. {\bf 61} (1988) 2582--2584.  

\bibitem{KuboKishi} K. Kubo, and T. Kishi, {\it Existence of long-range order in the XXZ model}, 
Phys. Rev. Lett. {\bf 61} (1988) 2585--2587.

\bibitem{Koma} T. Koma, {\it Maximum Spontaneous Magnetization and Nambu-Goldstone Mode},\hfill\break arXiv:1712.09018. 


\bibitem{FB} H. Falk, and L. W. Bruch, {\it Susceptibility and Fluctuation}, 
Phys. Rev. {\bf 180} (1969) 442--444. 

\bibitem{KomaTasaki} T. Koma, and H. Tasaki, {\it Symmetry Breaking and Finite-Size Effects in Quantum Many-Body 
Systems}, 
J. Stat. Phys. {\bf 76} (1994) 745--803. 

\bibitem{JNFP} E. Jord$\tilde{{\rm a}}$o Neves, and J. Fernando Perez, {\it Long Range Order in 
the Ground State of Two-Dimensional Antiferromagnets}, 
Phys. Lett. {\bf 114}A (1986) 331--333. 

\bibitem{GJ} J. Glimm, and A. Jaffe, {\it $\phi_2^4$ quantum field model in the single phase region: 
Differentiability of the mass and bounds on critical exponents}, 
Phys. Rev. D {\bf 10} (1974) 536--539. 

\bibitem{Mermin} N. D. Mermin, {\it Absence of Ordering in Certain Classical Systems}, 
J. Math. Phys. {\bf 8} (1967) 1061--1064. 

\bibitem{Simon} B. Simon, {\it The Statistical Mechanics of Lattice Gases, Vol.~I}, Princeton University Press, 
Princeton, New Jersey, 1993. 

\end{thebibliography}
\end{document}